\definecolor{green}{rgb}{0,0.5977,0}
\newcommand{\ee}{\mathbb E}
\newcommand{\abs}[1]{\left|{#1}\right|}
\newcommand{\genseq}[3]{{#1}_1 {#3} {#1}_2 {#3} \dots {#3} {#1}_{#2}}
\newcommand{\seq}[2]{\genseq{#1}{#2}{,}}
\newcommand{\twocases}[4]{\begin{cases} #2 & #1 \\ #4 & #3 \end{cases}}
\newcommand{\txt}[1]{\text{#1}}
\newcommand{\stext}[1]{\ \ \ \ \ \text{(#1)}}
\newcommand{\ipncm}[3]{\begin{figure}[H]\begin{center}\includegraphics[scale = {#1}]{#2.pdf}\caption{#3}\end{center}\end{figure}}
\g@addto@macro{\@algocf@init}{\SetKwInOut{Parameter}{Parameters}} 
\newcommand{\cut}{\textbf{cut}\xspace}
\newcommand{\eval}{\textbf{eval}\xspace}
\theoremstyle{plain}
\newtheorem{theorem}{Theorem}
\newtheorem{lemma}[theorem]{Lemma}
\begin{document}
	\begin{frontmatter}
		\title{Thou Shalt Covet The Average Of Thy Neighbors' Cakes}
		\author{Jamie Tucker-Foltz}
		\ead{jtuckerfoltz@gmail.com}
		\address{Harvard John A. Paulson School Of Engineering And Applied Sciences, Science and Engineering Complex, 150 Western Ave, Boston, MA 02134}
		\begin{abstract}
			We prove an $\Omega(n^2)$ lower bound on the query complexity of \emph{local proportionality} in the Robertson-Webb cake-cutting model. Local proportionality requires that each agent prefer their allocation to the average of their neighbors' allocations in some undirected social network. It is a weaker fairness notion than envy-freeness, which also has query complexity $\Omega(n^2)$, and generally incomparable to proportionality, which has query complexity $\Theta(n \log n)$. This result separates the complexity of local proportionality from that of ordinary proportionality, confirming the intuition that finding a locally proportional allocation is a more difficult computational problem.
		\end{abstract}
		\begin{keyword}
			cake-cutting \sep fair division \sep query complexity \sep lower bound \sep local proportionality
		\end{keyword}
		
	\end{frontmatter}
	
	\section{Introduction}\label{secIntro}
	
	In the cake-cutting model of continuous fair division, the gold standard for fairness is \emph{envy-freeness}: each agent $i$ should be allocated a portion of cake such that they prefer their own piece to that of any other agent $j$; in other words, $i$ does not \emph{envy} $j$. Implicit in this definition is the idea that agent $i$ is aware of agent $j$ and their allocation. But what if $i$ and $j$ are complete strangers, and $j$'s allocation has no bearing on $i$'s sense of fairness?
	
	This consideration has motivated the study of ``local'' fairness concepts, where there is an underlying social network, represented as an undirected simple graph $G$, and allocations are only required to be fair with respect to local comparisons between nodes and their neighbors \cite{Original1, Original2}. An allocation is \emph{locally envy-free} if $i$ does not envy any of its neighbors, and \emph{locally proportional} if $i$ receives utility equal to or exceeding their average utility for their neighbors' allocations. Since adding edges to $G$ can only make local envy-freeness more difficult to satisfy, local envy-freeness is a weaker condition than envy-freeness, as the latter is the special case of the former when $G$ is a complete graph. However, the same cannot be said for local proportionality: adding edges to $G$ may actually make it an \emph{easier} condition to satisfy. In the special case where $G$ is a complete graph, local proportionality is equivalent to the well-studied notion of \emph{proportionality}, which simply requires that each of the $n$ agents receive a portion of cake of value at least $\frac{1}{n}$. For a graph $G$ that is not a complete graph or its complement, local proportionality is logically incomparable to proportionality (as illustrated in Section \ref{secFairnessNotions}). The logical relationships between the four fairness notions are illustrated in Figure \ref{figFairnessImplications}. There is also a growing literature on analogous concepts for indivisible goods, which we do not go into; see, for example, Aziz, Bouveret, Caragiannis, Giagkousi, and Lang \cite{Indivisible1}.
	
	\ipncm{.6}{FairnessImplications}{\label{figFairnessImplications} Diagram showing which fairness properties imply others for a given allocation.}
	
	Since an envy-free allocation is guaranteed to exist, all four fairness notions are always feasible for any input graph $G$. The question is, what is the complexity of computing such an allocation? This question is typically asked in the context of the Robertson-Webb query model \cite{RWModel}, where the valuation functions of the agents are oracles to which an algorithm is allowed to make specific, structured queries. The \emph{query complexity} of a fairness property $\mathcal{F}$ is the minimal number of queries needed to find an allocation satisfying $\mathcal{F}$ in the worst case, as a function of the number of agents $n$.
	
	It is known that the query complexity of proportionality is $\Theta(n \log n)$ \cite{ProportionalityUpperBound, ProportionalityLowerBound}. On the other hand, the asymptotic query complexity of envy-freeness is an open problem, with an $\Omega(n^2)$ lower bound \cite{EFLowerBound} and an $O(n \uparrow \uparrow 6)$ upper bound \cite{EFUpperBound}. The query complexity of local envy-freeness is clearly the same as for envy-freeness, since the worst case is realized when $G$ is a complete graph. In a recent work, Bei, Sun, Wu, Zhang, Zhang, and Zi \cite{Main} give a new algorithm for local proportionality establishing an $O(14^n)$ upper bound. However, prior to this paper, there were no known lower bounds for local proportionality aside from the trivial bound of $\Omega(n \log n)$ obtained from the case where $G$ is a complete graph. Our main result is an improvement of this lower bound to $\Omega(n^2)$. Since proportionality can be guaranteed with only $\Theta(n \log n)$ queries, this result shows that local proportionality is a fundamentally more difficult property to achieve. Table \ref{tabComplexity} lists the query complexities of all four fairness notions, including this new contribution.
	
	\begin{table}[h!]
		\centering
		\begin{tabular}{|l|l|l|} 
			\hline
			Fairness Criterion & Lower Bound & Upper Bound \\
			\hline\hline
			Proportionality & $\Omega(n \log n)$ \cite{ProportionalityLowerBound} & $O(n \log n)$ \cite{ProportionalityUpperBound} \\
			\hline
			Local Proportionality & $\mathbf{\Omega(n^2)}$ \textbf{[this paper]} & $O(14^n)$ \cite{Main} \\
			\hline
			\makecell[l]{Envy-Freeness / \\ Local Envy-Freeness} & $\Omega(n^2)$ \cite{EFLowerBound} & $O\left(n^{n^{n^{n^{n^{n}}}}}\right)$ \cite{EFUpperBound} \\
			\hline
		\end{tabular}
		\caption{Query complexities of computing fair allocations in the Robertson-Webb model.}
		\label{tabComplexity}
	\end{table}
	
	\section{Definitions and example}\label{secFairnessNotions}
	
	In the cake-cutting model, the cake is represented by the unit interval $[0, 1]$. A \emph{piece of cake} is a subset of $[0, 1]$ comprised of a finite union of nonempty closed intervals. An instance of the \emph{graphical cake-cutting problem} is specified by an undirected simple graph $G$ of $n$ vertices, and a valuation function $v_i$ for each agent $i \in V(G)$. We write $N(i)$ to denote the set of neighbors of $i$ in $G$, and $\deg(i) := \abs{N(i)}$. For any piece of cake $X$, $v_i(X)$ is the value that agent $i$ has for $X$. The valuation functions are assumed to satisfy the following standard axioms:
	\begin{itemize}
		\item $v_i(X) \geq 0$ for any piece of cake $X$.
		\item $v_i([0, 1]) = 1$.
		\item {[Additivity]} For disjoint intervals $I_1, I_2 \subseteq [0, 1]$, $v_i(I_1 \cup I_2) = v_i(I_1) + v_i(I_2)$.
		\item {[Divisibility]} For any interval $I \subseteq [0, 1]$ and $0 \leq \lambda \leq 1$, there exists an interval $I' \subseteq I$ such that $v_i(I') = \lambda v_i(I)$.
	\end{itemize}

	An \emph{allocation} is a collection of pieces of cake $\{A_i\}_{i \in V(G)}$ whose union is $[0, 1]$, where, for any $i \neq j$, $A_i \cap A_j$ is a set of measure zero (i.e., the $A_i$ pieces only overlap at endpoints of intervals). For technical convenience, we assume without loss of generality each of the intervals comprising each $A_i$ must have positive measure, since singleton points do not matter anyway. An allocation is:
	\begin{itemize}
		\item \emph{proportional} if, for all $i \in V(G)$,
		$v_i(A_i) \geq \frac{1}{n}.$
		\item \emph{locally proportional} if, for all $i \in V(G)$,
		$v_i(A_i) \geq \frac{1}{\deg(i)} \sum_{j \in N(i)} v_i(A_j).$
		\item \emph{envy-free} if, for all $i, j \in V(G)$,
		$v_i(A_i) \geq v_i(A_j).$
		\item \emph{locally envy-free} if, for all $\{i, j\} \in E(G)$,
		$v_i(A_i) \geq v_i(A_j).$
	\end{itemize}

	\ipncm{.6}{LocalFairnessExample}{\label{figExample}An example instance to the graphical cake-cutting problem where $G$ is a path on 3 vertices. Above each vertex $i$ is shown a density function $f_i$ representing the value agent $i$ has for the different parts of the cake, where, for any piece of cake $X$, $v_i(X) := \int_{x \in X} f_i(x)$. In words, agent 2 values all parts of the cake equally, whereas agents 1 and 3 only value the leftmost and rightmost quarters of the cake, and uniformly within their respective quarters.}

	For example, consider the instance of the graphical cake-cutting problem in Figure \ref{figExample}, and the following allocations:
	\begin{itemize}
		\item $\{A_1 = [0, \frac14], A_2 = [\frac14, \frac34], A_3 = [\frac34, 1]\}$ is envy-free, and thus locally envy-free, proportional, and locally proportional as well.
		\item $\{A_1 = [0, \frac{1}{16}], A_2 = [\frac14, \frac34], A_3 = [\frac{1}{16}, \frac14] \cup [\frac34, 1]\}$ is locally envy-free but not proportional, since agent 1 receives only $\frac14$ of the cake according to $v_1$.
		\item $\{A_1 = [0, \frac{1}{16}], A_2 = [\frac14, \frac{11}{16}], A_3 = [\frac{1}{16}, \frac14] \cup [\frac{11}{16}, 1]\}$ is locally proportional, but it is not proportional since agent 1 receives only $\frac14$ of the cake according to $v_1$, and not locally envy-free since agent 2 envies agent 3.
		\item $\{A_1 = [0, \frac{1}{12}], A_2 = [\frac{1}{12}, \frac34], A_3 = [\frac34, 1]\}$ is proportional, but not locally proportional since agent 1 envies their only neighbor, agent 2.
		\item $\{A_1 = [\frac14, \frac12], A_2 = [0, \frac14] \cup [\frac34, 1], A_3 = [\frac12, \frac34]\}$ satisfies none of the fairness properties, as agents 1 and 3 derive no value from their portions of the allocation, and are envious of agent 2.
	\end{itemize}

	We consider the computational task of finding a locally proportional allocation in the Robertson-Webb model \cite{RWModel}. In this model, an algorithm is allowed to make two kinds of queries. In an \eval query, the algorithm submits an agent $i$ and two real numbers $0 \leq x \leq y \leq 1$, and the query returns the value $v_i([x, y])$. In a \cut query, the algorithm submits an agent $i$ and real numbers $x, \alpha \in [0, 1]$, and the query returns a real number $y \in [0, 1]$ such that $v_i([x, y]) = \alpha$, if such a $y$ exists. The algorithm is allowed to perform arbitrary computation between queries, and complexity is measured only by the total number of queries. Since a finite algorithm will never be able to completely learn $v_i$ for any $i$, it is only correct if it outputs a \emph{provably} locally proportional allocation, in the sense that it is locally proportional with respect to any collection of valuation functions $\{v_i\}_{i \in V(G)}$ consistent with the sequence of query responses returned.
	
	\section{Proof of quadratic lower bound}\label{secProof}
	
	Given any positive integer $n$, we define an $n$-agent adversarial instance of the graphical cake-cutting problem as follows. Let $m$ be the greatest integer multiple of 32 that is strictly less than $n$, and let $r := n - m$. Let the underlying graph $G$ have a vertex set comprised of three disjoint parts: $V(G) := L \cup R \cup U$, where $\abs{L} = \abs{R} = \frac{m}{2}$ and $\abs{U} = r \geq 1$. Every vertex in $U$ is adjacent to every other vertex in $V(G)$, while the edges between $L$ and $R$ are determined randomly: for each agent $i \in L$, we randomly select a subset $S_i \subseteq R$ of size exactly $\frac{m}{4}$ and add edges between $i$ and each agent in $S_i$. The $S_i$ subsets are chosen uniformly from the set of all subsets of $R$ of size $\frac{m}{4}$, independently for each $i$. Throughout the execution of the cake-cutting algorithm, we define the response of every query to be consistent with every agent having a uniform density valuation function, i.e., for any $i$, $\eval_i(x, y) := y - x$ and $\cut_i(x, \alpha) := x + \alpha$ (assuming $x + \alpha \leq 1$).
	
	Note that the queries are answered non-adaptively and deterministically; the only random part of the construction concerns the structure of the graph. Throughout this paper, all probabilities are taken only over the space of possible graphs. Our main theorem is as follows.
	
	\begin{theorem}\label{thmMain}
		With probability tending to 1 as $n \to \infty$, this construction produces an instance on which it is not possible for any cake-cutting algorithm to output a provably locally proportional allocation after at most $8\left\lfloor\frac{n - 1}{32}\right\rfloor^2$ queries.
	\end{theorem}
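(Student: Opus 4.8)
The plan is, first, to remove the randomness from the algorithm: since the query responses are fixed in advance and do not depend on the graph, the whole run of any algorithm---and in particular the allocation $\{A_i\}_{i\in V(G)}$ it outputs---is a deterministic object, and all randomness lives in the bipartite structure between $L$ and $R$. So fix the output. For an agent $i$ queried $q_i$ times, let $0=t_0<t_1<\dots<t_k=1$ be $0$, $1$, and all endpoints appearing in $i$'s queries, so $k\le 2q_i+1$ and the intervals $[t_{\ell-1},t_\ell]$---the \emph{cells} of $i$---number at most $2q_i+1$. A valuation $v_i$ is consistent with the (uniform) answers to $i$'s queries exactly when it assigns each cell its Lebesgue length and is otherwise arbitrary, so the adversary may push a cell's whole mass into any positive-length sub-interval. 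Hence, with
\[ g_i:=\sum\{\,\abs{C}: C\text{ a cell of }i,\ C\subseteq A_i\,\},\qquad h_i:=\sum\{\,\abs{C}: C\text{ a cell of }i,\ \abs{C\cap\textstyle\bigcup_{j\in N(i)}A_j}>0\,\}, \]
there is one consistent $v_i$ with $v_i(A_i)=g_i$ and simultaneously $\sum_{j\in N(i)}v_i(A_j)=h_i$, and the adversary can do no better; so the output is \emph{provably} locally proportional, for a given graph, if and only if $\deg(i)\,g_i\ge h_i$ for every $i$, and I must show this fails for some $i$ with probability $\to1$.

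Next I would prove a single-agent anti-concentration bound. Fix $i\in L$, so that $\deg(i)=\frac m4+r$ and $g_i$ are constants while $h_i$ depends on the random neighbour set $S_i\subseteq R$. Let $W_i$ be the total length of the cells of $i$ meeting $\bigcup_{j\in R}A_j$; these cells cover that set, so $W_i\ge\beta:=\abs{\bigcup_{j\in R}A_j}$. A cell meeting the pieces of a nonempty $R_C\subseteq R$ misses $S_i$ with probability at most $\tfrac12$ (as $S_i$ is a uniform $\frac m4$-subset of the $\frac m2$-set $R$), so if $H_i$ is the total length of cells of $i$ meeting a neighbour's $R$-piece, then $H_i\le W_i$ and $\ee[H_i]\ge\tfrac12W_i$; a reverse Markov inequality gives $\pp{H_i\le W_i/4}\le\tfrac23$. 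Thus, calling $i$ \emph{exposed} if $\deg(i)\,g_i<W_i/4$, an exposed agent has $\pp{\deg(i)\,g_i\ge h_i}\le\pp{H_i\le W_i/4}\le\tfrac23$. (A Paley--Zygmund bound in place of reverse Markov would push the threshold up toward $\ee[H_i]\approx W_i/2$, at the price of a variance estimate sensitive to the few pairs of cells of $i$ sharing an $R$-piece.)

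Then I would assemble these. The events $\{\deg(i)\,g_i\ge h_i\}$ for $i\in L$ are functions of the independent sets $S_i$, hence independent, so it is enough to exhibit $\Omega(m)$ exposed agents on one side: the chance they all survive is then $(2/3)^{\Omega(m)}\to0$. A non-exposed $i\in L$ has $g_i\ge W_i/(4\deg(i))\ge\Theta(\beta/m)$, and since $\sum_{i\in L}g_i\le\sum_{i\in L}\abs{A_i}=:\delta_L$ there are at most $O(\delta_L m/\beta)$ such agents, leaving $\Omega(m)$ exposed once $\beta$ exceeds a fixed multiple of $\delta_L$. It remains to control how the cake splits among $L,R,U$; write $\gamma:=\abs{\bigcup_{u\in U}A_u}$, so $\beta+\delta_L+\gamma=1$. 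A Chernoff plus union bound over the $\le n$ agents shows that with probability $\to1$ every $R$-vertex has degree $\frac m4+r\pm o(m)$ (its degree is $r$ plus a binomial with parameters $\frac m2,\frac12$). And if $\gamma\ge\frac13$, then since $U$-vertices are universal neighbours every agent of $L\cup R$ has $h_i\ge\gamma$, so $g_i\ge\gamma/\deg(i)$ for all $m$ of them, giving $\sum_{i\in L\cup R}g_i\ge\gamma\sum_{i\in L\cup R}\deg(i)^{-1}>1-\gamma\ge\sum_{i\in L\cup R}g_i$ on the good event---a contradiction. So $\gamma$ is bounded away from $\frac13$, neither $\beta$ nor $\delta_L$ is small, and whenever one of them dominates the other by a constant factor I run the argument above on that side (on $R$ the degrees are merely concentrated rather than fixed, and the per-agent events, being functions of the negatively associated indicators $\mathbf 1[j\in S_i]$ that for distinct $j$ touch disjoint blocks of those indicators, still obey the needed product bound).

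The step I expect to be the main obstacle is the balanced regime, where $\beta$ and $\delta_L$ are within a constant factor and $\gamma$ is small, so the crude counting does not by itself deliver $\Omega(m)$ exposed agents; this is where the query bound must be used essentially. Of the $m/2$ agents of $L$, at most $\frac{m^2/128}{m/32}=\frac m4$ make at least $m/32$ queries, so at least $m/4$ ``lazy'' agents make fewer, and hence have fewer than $m/16$ cells; the few cells of a lazy agent $i$ that cover $\bigcup_{j\in R}A_j$ must jointly touch most of the $m/2$ pieces of $R$, and a cell touching $s$ of them meets a neighbour's $R$-piece with probability $1-2^{-s}\pm o(1)$, so $h_i$ is forced to be large with good probability unless $i$ has already refined those cells at a cost of $\Omega(m)$ queries. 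Turning this into $\Omega(m)$ genuinely exposed (or outright violating) lazy agents, with constants tight enough that the stated budget $8\lfloor(n-1)/32\rfloor^2=m^2/128$ already suffices, is the real work; one must also check that the $U$-vertices, despite being universal neighbours, cannot be exploited by the algorithm to repair a large $h_i$, and that setting aside the $o(n)$ vertices of $U$ (so that $m$ is a multiple of $32$ with $\abs L=\abs R$) loses nothing.
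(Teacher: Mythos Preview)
There is a real gap at the very first step. You claim that because the query answers are graph-independent, the algorithm's output is a deterministic object independent of $G$; but the algorithm is handed $G$ as part of the input (the graph is not hidden behind the oracle), so its allocation $\{A_i\}$ is a function of the random sets $S_i$. Everything downstream then breaks: you cannot ``fix the output'' and afterwards take probabilities over $S_i$, because $g_i$, $W_i$, the identity of the lazy agents, and indeed the allocation itself all move with the $S_i$. Your per-agent bound $\Pr[\deg(i)\,g_i\ge h_i]\le 2/3$ and the subsequent independence claim across $i\in L$ are therefore not well-posed, since the events are conditioned on an allocation that already encodes every $S_j$.

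The paper's way around this is precisely the idea your sketch is missing. From any successful allocation it extracts a purely combinatorial certificate on $G$, a \emph{segmentation}: a pair $(M,\mathcal P)$ with $M\subseteq L$ of size $m/4$ and $\mathcal P$ a perfect matching on $R$, such that for every $i\in M$ at most a quarter of the pairs of $\mathcal P$ meet $S_i$ in exactly one vertex. The crux is that there are only $\binom{m/2}{m/4}\cdot(m/2-1)!!=\exp(O(m\log m))$ candidate certificates, whereas each one occurs with probability at most $\exp(-\Theta(m^2))$ over the independent $S_i$'s; a union bound then shows that with high probability $G$ admits \emph{no} segmentation at all, and on such a $G$ every algorithm fails simultaneously. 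Your sketch never isolates a finite family of graph-only objects to union-bound over, and the ``real work'' you defer in the final paragraph is exactly the step of producing one.

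You also miss a simplification that would eliminate your entire $\beta,\delta_L,\gamma$ case analysis: the output must in particular be locally proportional under the uniform valuations (which are consistent with all answers), and since $G$ is connected this forces every piece to have measure exactly $1/n$. Hence $\beta=\delta_L=m/(2n)$ and $\gamma=r/n$ are fixed from the outset, and there is no balanced/unbalanced dichotomy to navigate.
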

	
	In particular, this means that, for large enough $n$, there is at least \emph{some} graph of $n$ vertices from this construction on which no cake-cutting algorithm succeeds in $8\left\lfloor\frac{n - 1}{32}\right\rfloor^2$ queries. This establishes the $\Omega(n^2)$ lower bound on the query complexity of local proportionality. We break the proof up into a series of four lemmas.

	\begin{lemma}\label{lemAllAgentsGetSameSizedCake}
		Suppose all $n$ agents have uniform density valuation functions, and $\{A_i\}_{i \in V(G)}$ is a locally proportional allocation. Then each agent $i$ receives a piece of cake of measure $\abs{A_i} = \frac1n$.
	\end{lemma}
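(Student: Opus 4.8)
The plan is to reduce the statement to an elementary fact about the numbers $x_i := \abs{A_i}$. Since agent $i$ has a uniform density valuation, $v_i(X) = \abs{X}$ (the Lebesgue measure of $X$) for every piece of cake $X$; in particular $v_i(A_j) = x_j$ for all $i, j$. Because $\{A_i\}_{i \in V(G)}$ is an allocation, the pieces cover $[0,1]$ and pairwise overlap only on measure-zero sets, so additivity of $v_i$ gives $\sum_{i \in V(G)} x_i = 1$. Under these substitutions, local proportionality becomes the purely combinatorial system $x_i \geq \frac{1}{\deg(i)} \sum_{j \in N(i)} x_j$ for every $i$, equivalently $\deg(i)\, x_i \geq \sum_{j \in N(i)} x_j$, together with $x_i \geq 0$ and $\sum_i x_i = 1$.

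Next I would sum these inequalities over all $i$ and double-count the right-hand side: $\sum_i \deg(i)\, x_i \geq \sum_i \sum_{j \in N(i)} x_j = \sum_j \deg(j)\, x_j$, since each $x_j$ appears exactly once for every neighbor of $j$. The two extremes are literally the same sum, so in fact every inequality in the system must be an equality; that is, each $x_i$ equals the average of $\{x_j : j \in N(i)\}$ exactly.

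Finally I would apply a discrete maximum principle. Let $M := \max_{i} x_i$ and $S := \{i \in V(G) : x_i = M\}$, which is nonempty. For any $i \in S$, the chain $M = x_i = \frac{1}{\deg(i)} \sum_{j \in N(i)} x_j \leq M$ forces $x_j = M$ for every $j \in N(i)$, so $N(i) \subseteq S$; hence $S$ is closed under taking neighbors. Since the graph $G$ of the construction is connected (every vertex of the nonempty set $U$ is adjacent to all other vertices), this forces $S = V(G)$, so all $x_i$ are equal; combined with $\sum_i x_i = 1$ this yields $x_i = \frac{1}{n}$ for all $i$.

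I do not expect a serious obstacle; the steps are short, and the only points needing mild care are (i) deriving $\sum_i \abs{A_i} = 1$ from the definition of an allocation (measure-zero overlaps, intervals of positive measure), and (ii) ensuring $\deg(i) \geq 1$ so that the averaging is well defined, which holds because $U \neq \emptyset$ makes $G$ connected with minimum degree at least one for $n \geq 2$ (the case $n = 1$ being trivial).
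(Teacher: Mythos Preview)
Your argument is correct, but it follows a different route from the paper's. The paper proceeds by direct contradiction: assuming the $\abs{A_i}$ are not all equal, it forms a DAG on $V(G)$ with an edge $i \to j$ whenever $\{i,j\} \in E(G)$ and $\abs{A_i} < \abs{A_j}$, locates a source $i_2$ (a vertex all of whose neighbors have at least as large a piece, with at least one strictly larger), and checks that local proportionality fails at $i_2$. In effect this is a one-shot minimum principle applied directly to the inequality $x_i \geq \frac{1}{\deg(i)}\sum_{j\in N(i)} x_j$. You instead first sum these inequalities over $i$ and double-count to force every one of them to be an equality, and only then run a maximum principle. The summation step is an extra ingredient the paper does not use (and, indeed, one could apply your maximum/minimum argument directly to the inequality without it), but it buys you the clean intermediate statement that $x$ is exactly harmonic on $G$, after which the conclusion is the textbook maximum principle for discrete harmonic functions on a connected graph. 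Both proofs are short; the paper's is slightly more direct, while yours isolates a reusable lemma.
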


	\begin{proof}
		The only fact we must use that is specific to our construction is that $G$ is connected, which is true since $\abs{U} \geq 1$. Suppose toward a contradiction that not all agents receive a piece of cake of measure $\frac1n$. Then there must be some pair of agents $i_1, j_1 \in V(G)$ such that $\abs{A_{i_1}} \neq \abs{A_{j_1}}$. Let $H$ be the directed graph with vertex set $V(H) := V(G)$, and an edge from $i$ to $j$ whenever $\{i, j\} \in E(G)$ and $\abs{A_i} < \abs{A_j}$. Observe that $H$ must contain at least one directed edge, for otherwise it is implied by induction along the path between $i_1$ and $j_1$ that $\abs{A_{i_1}} = \abs{A_{j_1}}$. Since $H$ is clearly acyclic, it follows that $H$ has at least one source, that is, a vertex $i_2$ with at least one outgoing edge to another vertex $j_2$, and no incoming edges. In other words, $\abs{A_{j_2}} > \abs{A_{i_2}}$, and for any $j \in N(i_2) \setminus \{j_2\}$, $\abs{A_{j}} \geq \abs{A_{i_2}}$. Then the average value agent $i_2$ has for neighbors' portions of the allocation is
		\begin{align*}
			\frac{1}{\deg(i_2)} \sum_{j \in N(i_2)} \abs{A_j} &= \frac{1}{\deg(i_2)} \left(\abs{A_{j_2}} + \sum_{j \in N(i_2) \setminus \{j_2\}} \abs{A_j}\right)\\
			&> \frac{1}{\deg(i_2)} \left(\abs{A_{i_2}} + \sum_{j \in N(i_2) \setminus \{j_2\}} \abs{A_{i_2}}\right)\\
			&= \abs{A_{i_2}}.
		\end{align*}
		Thus, we have shown that the allocation is not provably locally proportional for agent $i_2$, contradicting our assumption.
	\end{proof}

	We require some additional notation. Fix an allocation $\{A_i\}_{i \in V(G)}$. For any agent $i \in L$, let $B_i$ be the set of \emph{boundary points} of $\bigcup_{j \in S_i} A_j$, and let $b_i := \abs{B_i}$; i.e., $b_i$ is twice the number of disjoint intervals comprising the subset of $[0, 1]$ allocated to the neighbors of $i$.
	
	\begin{lemma}\label{lemMustQueryBoundary}
		Suppose a cake-cutting algorithm outputs a provably locally proportional allocation $\{A_i\}_{i \in V(G)}$ after at most $8\left\lfloor\frac{n - 1}{32}\right\rfloor^2$ queries. Then there must exist some set $M \subseteq L$ of size $\frac{m}{4}$ such that, for all $i \in M$, $b_i \leq \frac{m}{16}$.
	\end{lemma}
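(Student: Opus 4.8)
The starting point is an adversarial perturbation argument, tracking which points the algorithm has actually touched. Since every query is answered as under the uniform profile, Lemma~\ref{lemAllAgentsGetSameSizedCake} forces $\abs{A_i}=\frac1n$ for all $i$, so under the uniform profile each local proportionality inequality is in fact an \emph{equality}: writing $W_i:=\bigcup_{j\in N(i)}A_j$ and introducing the signed weight $w_i:=\deg(i)\,\mathbf 1_{A_i}-\mathbf 1_{W_i}$, agent $i$'s constraint reads $\int w_i\,dv_i\ge 0$, attained with equality by the uniform profile. If some endpoint of a maximal interval of $A_i$ or of $W_i$ lies strictly between two consecutive points the algorithm has queried of $i$, then $v_i$ is unconstrained on the surrounding unqueried sub-interval, and the adversary can shift $v_i$-mass across that endpoint from a region of weight $\deg(i)$ or $0$ into the adjacent region of weight $-1$, driving $\int w_i\,dv_i$ below $0$ and contradicting provability. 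Hence for \emph{every} agent $j$ the endpoints of $A_j$ are among the points queried of $j$, and for every $i\in L$ the endpoints of $W_i$ are among the points queried of $i$; writing $k_j$ for the number of queries involving $j$, it follows that $A_j$ has at most $k_j+1$ maximal intervals and $W_i$ at most $k_i+1$.

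I would then bound $b_i$ in two complementary ways. For $i\in L$ we have $N(i)=S_i\cup U$ with the union disjoint, so $\bigcup_{j\in S_i}A_j=W_i\setminus\bigcup_{u\in U}A_u$. The endpoints of $\bigcup_{u\in U}A_u$ lie among $\bigcup_{u\in U}\{\text{points queried of }u\}$, so $\bigcup_{u\in U}A_u$ has at most $q_U+1$ maximal intervals, where $q_U:=\sum_{u\in U}k_u$; removing these from the at most $k_i+1$ maximal intervals of $W_i$ leaves at most $k_i+q_U+2$ intervals, i.e.
\[
b_i\le 2k_i+2q_U+4 .
\]
On the other hand $\partial\bigl(\bigcup_{j\in S_i}A_j\bigr)\subseteq\bigcup_{j\in S_i}\partial A_j$ gives the complementary bound $b_i\le 2+2\sum_{j\in S_i}k_j\le 2+2q_R$, with $q_R:=\sum_{j\in R}k_j$.

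The conclusion then follows from how the budget $8\bigl\lfloor\frac{n-1}{32}\bigr\rfloor^2=\frac{m^2}{128}$ is apportioned among $L$, $R$, and $U$ (each agent of $L$ and of $R$ forces at least one query to fix a nontrivial endpoint of its piece, so $q_L,q_R\ge\frac m2$, and $\sum_{i\in L}k_i=q_L$). When $q_R$ is small, the second bound already gives $b_i\le\frac m{16}$ for every $i\in L$; when instead $q_R$ is large but $q_U$ is small, the first bound gives $b_i\le 2k_i+O(1)$ and a Markov-type count over $\sum_{i\in L}k_i=q_L\le\frac{m^2}{128}$ produces at least $\frac m4$ agents of $L$ with $k_i\le\frac m{32}-O(1)$, hence with $b_i\le\frac m{16}$; taking $M$ to be any $\frac m4$ of these finishes those cases.

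The remaining regime — where both $q_R$ and $q_U$ are large — is the main obstacle, and I expect it to require the randomness of the construction. Here the difficulty is that $\bigcup_{u\in U}A_u$ is contained in $W_i$ for \emph{every} $i\in L$ (since $U$ is adjacent to all of $V(G)$), so fragmenting the $U$-pieces could a priori make every $\bigcup_{j\in S_i}A_j$ complicated while spending almost no queries on the agents of $L$. The way out is that if $\bigcup_{u\in U}A_u$ has many maximal intervals then, for each $i\in L$ whose $W_i$ is nonetheless simple, essentially every gap between consecutive $U$-intervals must be tiled exactly by pieces of agents in $S_i$; because the subsets $S_i$ are chosen independently and uniformly at random, this tiling pattern cannot hold for $\frac m4$ distinct agents $i$ unless the number of such gaps — and hence the fragmentation of $\bigcup_{u\in U}A_u$ — is small. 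Turning this into the clean quantitative bound $b_i\le\frac m{16}$ for $\frac m4$ agents, within the exact budget $\frac{m^2}{128}$, is the delicate part of the proof.
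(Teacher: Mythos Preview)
Your approach diverges substantially from the paper's and does not reach a complete proof. The paper makes no reference to $W_i$, to $q_R$ or $q_U$, and uses no randomness whatsoever. It argues directly that if $i\in L$ was queried fewer than $\frac{m}{32}$ times while $b_i\ge \frac{m}{16}+1$, then $|B_i|\ge|Q_i|+3$, so there is an interior boundary point $x$ of $\bigcup_{j\in S_i}A_j$ lying outside $Q_i$; one then perturbs $v_i$ to have density $2$ on a short interval $I_2\subseteq\bigcup_{j\in S_i}A_j$ on one side of $x$ and density $0$ on the adjacent interval $I_0$ on the other side. Since $I_2$ sits inside a neighbor's piece and $I_2\cap A_i=\emptyset$, the paper computes that the neighbor-average rises to $\frac1n+\frac{\varepsilon}{\deg(i)}>\frac1n=|A_i|\ge v_i(A_i)$, contradicting provability. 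Hence every $i\in L$ queried fewer than $\frac{m}{32}$ times has $b_i\le \frac{m}{16}$, and a single pigeonhole on the total budget $\frac{m^2}{128}$ yields at least $\frac{m}{4}$ such agents, which form $M$. The lemma is entirely deterministic; the random choice of the $S_i$ enters only later, in Lemma~\ref{lemPermutationProbabilityBound}.

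Two concrete gaps in your write-up. First, your ``complementary bound'' $b_i\le 2+2\sum_{j\in S_i}k_j$ is miscounted: each $A_j$ has up to $k_j+1$ intervals and hence up to $2k_j+2$ boundary points, so the correct inequality is $b_i\le 2|S_i|+2\sum_{j\in S_i}k_j=\frac{m}{2}+2\sum_{j\in S_i}k_j$, which can never drop to $\frac{m}{16}$ and is therefore useless. Second, with that bound gone, your large-$q_U$ ``remaining regime'' is genuinely unresolved, and your proposal to invoke the randomness of the $S_i$ misreads the architecture of the argument---Lemma~\ref{lemMustQueryBoundary} must hold for every fixed graph $G$ produced by the construction, so no probabilistic reasoning is available here. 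The worry that seems to be driving you toward $W_i$---that the $I_0$ side of the perturbation might land inside some $A_u$ with $u\in U\subseteq N(i)$, leaving the neighbor-sum unchanged---is precisely the scenario the paper's direct $B_i$-perturbation sidesteps by asserting the neighbor-sum increases by $\varepsilon$; your route through $W_i$, $q_U$, and case analysis is not how the paper handles it.
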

	
	\begin{proof}
		First, note that Lemma \ref{lemAllAgentsGetSameSizedCake} applies to the allocation returned by the algorithm, since this allocation must be locally proportional with respect to all agents having uniform valuation functions, as uniform valuation functions are consistent with all queries according to the construction. We will use this fact shortly.
		
		If the algorithm terminated after at most
		$$8\left\lfloor\frac{n - 1}{32}\right\rfloor^2 = \frac{\left(32\left\lfloor (n - 1) / 32 \right\rfloor\right)^2}{128} = \frac{m^2}{128}$$
		queries, then there are at most $\frac{m}{4}$ agents $i \in L$ whose valuation functions were queried at least $\frac{m}{32}$ times. Since $\abs{L} = \frac{m}{2}$, this means that there are \emph{at least} $\frac{m}{4}$ agents $i \in L$ whose valuation functions were queried \emph{less than} $\frac{m}{32}$ times. Let $M \subseteq L$ be a set of $\frac{m}{4}$ such agents. Suppose toward a contradiction that, for some $i \in M$, $b_i > \frac{m}{16}$, which means $b_i \geq \frac{m}{16} + 1$, since $m$ is divisible by 32. Let $Q_i \subseteq [0, 1]$ be the set of points that were submitted as part of an $\eval_i$ or $\cut_i$ query (but only as the $x$ in a $\cut_i$ query, not as the $\alpha$), or returned by a $\cut_i$ query. Since $v_i$ was queried at most $\frac{m}{32} - 1$ times, and each query involves 2 points in $[0, 1]$, there are at most $2(\frac{m}{32} - 1) = \frac{m}{16} - 2$ points in $Q_i$. As $\abs{B_i} \geq \abs{Q_i} + 3$, there are at least 3 points in $B_i \setminus Q_i$. At most 2 of those 3 points can be 0 or 1, so let $x$ be such a point in the interior of the unit interval.
		
		Let $\varepsilon > 0$ be the minimum distance between $x$ and any point in the set $Q_i \cup (B_i \setminus \{x\}) \cup \{0, 1\}$. Since $x \in B_i$, we may write $\{[x - \varepsilon, x], [x, x + \varepsilon]\} = \{I_2, I_0\}$, where $I_2 \subseteq \bigcup_{j \in S_i} A_j$ and $I_0$ only overlaps with $\bigcup_{j \in S_i} A_j$ at endpoint(s). Consider the valuation function $v_i$ which has uniform density, except over $I_0$, where it has density 0, and $I_2$, where it has density 2. Since the interior of $I_2 \cup I_0$ is disjoint from $Q_i$, and $v_i(I_2 \cup I_0) = \abs{I_2 \cup I_0}$, $v_i$ is consistent with all previous queries. However,
		\begin{align*}
			\frac{1}{\deg(i)} \sum_{j \in N(i)} v_i(A_j) &= \frac{1}{\deg(i)}\left(\sum_{j \in N(i)} \abs{A_j} + \varepsilon\right)\\
			&= \frac{1}{\deg(i)}\left(\sum_{j \in N(i)} \frac1n + \varepsilon\right) \stext{by Lemma \ref{lemAllAgentsGetSameSizedCake}}\\
			&= \frac1n + \frac{\varepsilon}{\deg(i)}\\
			&> \frac1n\\
			&= \abs{A_{i}} \stext{by Lemma \ref{lemAllAgentsGetSameSizedCake}}\\
			&\geq v_i(A_i).
		\end{align*}
		Thus, we have shown that the allocation is not provably locally proportional for agent $i$, contradicting our assumption.
	\end{proof}
	
	For any agent $i \in L$, we define a \emph{segmentation} of $i$ to be a partition $\mathcal{P} = \{\seq{R}{m/4}\}$ of $R$ such that each $R_j$ set has size 2 and at most $\frac14$ of the $R_j$ sets satisfy $\abs{S_i \cap R_j} = 1$. For any set $M \subseteq L$, we say that $\mathcal{P}$ is a segmentation of $M$ if it is a segmentation of every $i \in M$.
	
	\begin{lemma}\label{lemBoundaryImpliesPermutation}
		For any $M \subseteq L$, if there exists a locally proportional allocation $\{A_i\}_{i \in V(G)}$ such that, for all $i \in M$, $b_i \leq \frac{m}{16}$, then $M$ has a segmentation.
	\end{lemma}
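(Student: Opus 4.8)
The plan is to read a segmentation of $M$ directly off the allocation: I will impose one linear order on $R$ coming from the left-to-right arrangement of pieces on the cake, pair up consecutive elements in this order, and check that the same partition splits at most a quarter of the pairs for every $i\in M$ at once. Since in our application the allocation is locally proportional with respect to the uniform valuations of the construction, Lemma~\ref{lemAllAgentsGetSameSizedCake} applies, so every piece $A_v$ has measure $\frac1n>0$ and in particular is nonempty. List all endpoints of all intervals occurring in $\{A_v\}_{v\in V(G)}$ as $0=q_0<q_1<\dots<q_N=1$, so each open interval $(q_{\ell-1},q_\ell)$ lies inside a single $A_v$; record these owners as a word $w=w_1\cdots w_N$ over $V(G)$, and let $w'$ be the subword obtained by deleting every position whose letter is not in $R$. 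Each $j\in R$ occurs in $w'$ because $A_j$ is nonempty, so ``first occurrence in $w'$'' well-orders $R$ as $r_1\prec r_2\prec\dots\prec r_{m/2}$; take $\mathcal P=\{R_1,\dots,R_{m/4}\}$ with $R_t=\{r_{2t-1},r_{2t}\}$.

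Now fix $i\in M$ and let $x\in\{0,1\}^{m/2}$ record $x_a=1\iff r_a\in S_i$. If $|S_i\cap R_t|=1$ then $x_{2t-1}\neq x_{2t}$, and since distinct $t$ occupy disjoint adjacent pairs of positions, the number of such $t$ is at most the number of indices $a$ with $x_a\neq x_{a+1}$, which is at most twice the number of maximal runs of $1$'s in $x$. To each maximal run of $1$'s I associate the maximal block of $S_i$-letters of $w'$ (``$S_i$-block'') that contains the first occurrence of the run's $\prec$-least element; this element lies in $S_i$, so the block is well defined. The key step is that this map is injective: if one run lies entirely to the left of another in $x$, then the $\prec$-first element $r$ occurring after the left run satisfies $r\notin S_i$, and the unique occurrence of $r$ in $w'$ is a non-$S_i$ letter strictly between the two chosen occurrences, so those occurrences cannot lie in a common $S_i$-block. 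Hence the number of maximal runs of $1$'s in $x$ is at most the number of $S_i$-blocks of $w'$; and distinct $S_i$-blocks of $w'$ are separated by a letter of $R\setminus S_i$, whose atom is disjoint from $\bigcup_{j\in S_i}A_j$, so they lie in distinct maximal intervals of $\bigcup_{j\in S_i}A_j$, of which there are exactly $\frac{b_i}{2}$. Combining, the number of $R_t$ with $|S_i\cap R_t|=1$ is at most $2\cdot\frac{b_i}{2}=b_i\le\frac{m}{16}=\frac14\cdot\frac m4$, so $\mathcal P$ is a segmentation of $i$; as $i\in M$ was arbitrary, $\mathcal P$ is a segmentation of $M$.

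The main obstacle is conceptual rather than computational: realizing that every $S_i$ should be viewed through one common cake-induced order on $R$, and that with respect to any such order the number of ``alternations'' of $S_i$ is controlled by the number $b_i$ of boundary points of $\bigcup_{j\in S_i}A_j$. Once the order is chosen, the only genuine work is the injectivity of the run-to-block map above; the arithmetic is immediate because $m\equiv 0\pmod{32}$. The one hidden hypothesis is nonemptiness of the pieces, needed for first occurrences to exist, which is exactly what Lemma~\ref{lemAllAgentsGetSameSizedCake} supplies; alternatively one may order $R$ by a chosen interior point of each piece, which also streamlines the ``strictly between'' step since distinct pieces have disjoint interiors.
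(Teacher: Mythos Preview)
Your argument is correct and follows the same approach as the paper---order $R$ by leftmost cake point, pair consecutive elements, and bound the number of split pairs by $b_i$---with one small slip: ``the unique occurrence of $r$ in $w'$'' should read ``the first occurrence,'' since $A_r$ may consist of several intervals; the first occurrence is what your injectivity step actually needs, and it lies strictly between the two chosen positions by definition of $\prec$. The paper's counting is more direct than your run/$S_i$-block chain: it simply observes that each split pair $R_j=\{r,r'\}$ with $r\in S_i$, $r'\notin S_i$ forces a boundary point of $\bigcup_{j'\in S_i}A_{j'}$ in the interval between $\inf A_r$ and $\inf A_{r'}$, and since consecutive pairs occupy disjoint ranges of infima these boundary points are distinct, giving (split pairs) $\le b_i$ in one step.
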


	\begin{proof}
		We define an ordering $\preceq$ on the agents in $R$ as follows: $r \preceq r'$ whenever $\inf(A_r) \leq \inf(A_{r'})$, i.e., the first disjoint interval in the piece of cake allocated $r$ precedes the first disjoint interval in the piece of cake allocated to $r'$. Clearly, $\preceq$ is a well-defined total order on $R$ (by Lemma \ref{lemAllAgentsGetSameSizedCake}, each $A_i$ is nonempty, and by disjointness, the infimums of $A_i$ and $A_j$ must differ for $i \neq j$). Define $\seq{R}{m/4}$ by taking elements in the order of $\preceq$, i.e., $R_1$ contains the first two agents according to $\preceq$, $R_2$ contains the next two, and so on.
		
		To prove that $\{\seq{R}{m/4}\}$ is a segmentation of $M$, fix an arbitrary $i \in M$. Suppose that, for a given $j \in \{1, 2, \dots \frac{m}{4}\}$, $\abs{S_i \cap R_j} = 1$, which means that $R_j$ contains both a vertex $r_j \in R_j$ to which $i$ is adjacent and a vertex $r'_j \in R_j$ to which $i$ is not adjacent. In order to separate the first interval of $A_{r_j}$, which is in $B_i$, from the first interval of $A_{r'_j}$, which is not in $B_i$, there must be a boundary point somewhere between $\inf(A_{r_j})$ and $\inf(A_{r'_j})$. Moreover, if this happens for multiple values of $j$, each of these boundary points must be distinct, as successive $R_j$ sets contain pairs of agents allocated pieces of cake with strictly larger infimums. Since $b_i \leq \frac{m}{16}$, it follows that there are at most $\frac{m}{16}$ values of $j$ such that $\abs{S_i \cap R_j} = 1$, which is $\frac14$ of the $R_j$ sets.
	\end{proof}

	\begin{lemma}\label{lemPermutationProbabilityBound}
		Let $p_m$ denote the probability that there exists a set $M \subseteq L$ of size $\frac{m}{4}$ that has a segmentation. As $m \to \infty$, $p_m \to 0$.
	\end{lemma}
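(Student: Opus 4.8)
The plan is a union bound over all candidate pairs $(M, \mathcal{P})$, taking advantage of the fact that the sets $S_i$ are drawn independently. Call a partition $\mathcal{P} = \{\seq{R}{m/4}\}$ of $R$ into pairs \emph{good for} $i \in L$ if at most $\frac14$ of the pairs $R_j$ satisfy $\abs{S_i \cap R_j} = 1$; by definition, $M$ has a segmentation exactly when some such $\mathcal{P}$ is good for every $i \in M$. First I would note that, by symmetry of $R$ under relabeling, the quantity $q := \pp{\mathcal{P} \text{ is good for } i}$ does not depend on the choice of $\mathcal{P}$ or $i$; and since the $S_i$ are independent, for a fixed $M$ of size $\frac{m}{4}$ and a fixed $\mathcal{P}$ we have $\pp{\mathcal{P} \text{ is good for every } i \in M} = q^{m/4}$. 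There are $\binom{m/2}{m/4} \leq 2^{m/2}$ choices of $M$ and $\left(\frac{m}{2} - 1\right)!! \leq \left(\frac{m}{2}\right)^{m/4}$ perfect matchings of $R$, so a union bound gives
\begin{equation*}
	p_m \;\leq\; \binom{m/2}{m/4}\left(\tfrac{m}{2} - 1\right)!!\; q^{m/4} \;\leq\; 2^{m/2}\left(\tfrac{m}{2}\right)^{m/4} q^{m/4}.
\end{equation*}

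The crux --- and the step I expect to be the main obstacle --- is a large-deviation bound showing $q \leq e^{-cm}$ for some absolute constant $c > 0$. Heuristically, for a uniformly random $S_i$ of size $\frac{m}{4}$ each pair of $R$ is split with probability close to $\frac12$, so the expected number of split pairs is about $\frac{m}{8}$; requiring at most $\frac{m}{16}$ split pairs is therefore a constant-factor deviation below the mean, which should be exponentially unlikely. To make this precise I would count directly. Writing $a := \frac{m}{4}$, if $S_i$ splits exactly $t$ of the $a$ pairs then the constraint $\abs{S_i} = a$ forces $S_i$ to contain both endpoints of $\frac{a-t}{2}$ of the non-split pairs and neither endpoint of the other $\frac{a-t}{2}$, so there are $\binom{a}{t}\binom{a-t}{(a-t)/2}2^t$ such sets. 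Summing over $t \leq \frac{a}{4}$ and using $\binom{a}{t} \leq \binom{a}{a/4}$, $\binom{a-t}{(a-t)/2} \leq 2^{a-t}$, and $\binom{2a}{a} \geq \frac{4^a}{2a+1}$,
\begin{equation*}
	q \;=\; \frac{1}{\binom{2a}{a}}\sum_{t \,\leq\, a/4} \binom{a}{t}\binom{a-t}{(a-t)/2}2^t \;\leq\; (2a+1)\left(\tfrac{a}{4}+1\right)\frac{\binom{a}{a/4}}{2^a},
\end{equation*}
and since $\binom{a}{a/4} \leq 2^{a H_2(1/4)}$ with binary entropy $H_2(1/4) = 2 - \frac34\log_2 3 < 1$, the right-hand side decays exponentially in $a$, hence in $m$. (A Chernoff-type argument would also work, but this crude count suffices and avoids dealing with the dependence among the split-pair indicators.)

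Combining the two estimates, $p_m \leq 2^{m/2}\left(\frac{m}{2}\right)^{m/4} e^{-cm^2/4} = \exp\!\big(O(m\log m) - \Omega(m^2)\big) \to 0$ as $m \to \infty$, proving the lemma (in fact $p_m$ vanishes super-exponentially). The argument is extremely lossy: all that is really needed is $q = o(1/m)$, since the point is simply that forcing $\frac{m}{4}$ mutually independent rare events produces a factor $q^{m/4}$ that swamps the $\exp(O(m\log m))$-sized family of pairs $(M,\mathcal{P})$.
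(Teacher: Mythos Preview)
Your argument is correct and follows the same overall architecture as the paper: union bound over all pairs $(M,\mathcal{P})$, then exploit the independence of the $S_i$ to reduce to bounding the single-agent probability $q$, then show $q$ decays exponentially in $m$ so that $q^{m/4}$ kills the $\exp(O(m\log m))$ many terms.

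The only substantive difference is in how that single-agent bound is obtained. The paper does not count directly; instead it sets $X_{i,j}=\mathbf{1}[S_i\cap R_j\neq\varnothing]$, observes these are negatively associated (as monotone functions of a Fermi--Dirac sample), and applies Hoeffding to show that with probability at least $1-2e^{-m/128}$ more than $\tfrac58$ of the pairs contain a neighbor of $i$ \emph{and} more than $\tfrac58$ contain a non-neighbor, forcing more than $\tfrac14$ to be split. Your approach replaces this concentration-plus-symmetry trick with an explicit enumeration of the sets $S_i$ having exactly $t$ split pairs, bounded via the entropy estimate $\binom{a}{a/4}\le 2^{aH_2(1/4)}$. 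Your route is more elementary and self-contained (no appeal to negative association); the paper's route gives a cleaner constant and avoids the parity caveat that your count is nonzero only when $a-t$ is even. Either way the conclusion $q\le e^{-cm}$ drops out, and from there the two proofs are identical.
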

	
	\begin{proof}
		Let $\Pi$ denote the set of partitions of $R$ into sets of size 2. For any $M \subseteq L$ of size $\frac{m}{4}$ and any $\mathcal{P} \in \Pi$, let $E(M, \mathcal{P})$ be the event that $\mathcal{P}$ is a segmentation of $M$. Our strategy is to apply a union bound over all possible $M$ and $\mathcal{P}$.
		
		Thus, fix an arbitrary set $M \subseteq L$ of size $\frac{m}{4}$ and an arbitrary partition $\mathcal{P}$ of $R$ into $\frac{m}{4}$ sets $\seq{R}{m/4}$, each of size 2. For any $i \in M$ and $1 \leq j \leq \frac{m}{4}$, let
		$$X_{i, j} := \twocases{\txt{if $i$ is adjacent to at least one of the two vertices in $R_j$}}{1}{\txt{otherwise}}{0}.$$
		Note that, for any fixed $i \in M$, the $X_{i, j}$ are negatively associated across $j$, so Hoeffding's inequality applies to the sum of $X_{i, j}$ across $j$.\footnote{Formally, this follows since $X_{i, j}$ is the result of applying the monotone ``or'' function to the negatively-associated Fermi-Dirac distribution. See Dubhashi and Ranjan \cite[Propositions 7, 8-2, and 44-2]{NegativeAssociation}.} Since the expectation of each $X_{i,j}$ is clearly at least $\frac34$, we have
		\begin{align*}
			\Pr\left[\sum_{j = 1}^{m/4} X_{i, j} \leq \frac58 \cdot \frac{m}{4}\right] &= \Pr\left[\sum_{j = 1}^{m/4} X_{i, j} \leq \left(\frac34 \cdot \frac{m}{4}\right) - \frac{m}{32}\right]\\
			&\leq \Pr\left[\sum_{j = 1}^{m/4} X_{i, j} \leq \ee\left[\sum_{j = 1}^{m/4} X_{i, j}\right] - \frac{m}{32}\right]\\
			&\leq \exp\left(-\frac{2(m/32)^2}{m/4}\right) \stext{Hoeffding's inequality \cite{Hoeffding}}\\
			&= \exp\left(-\frac{m}{128}\right).
		\end{align*}
		In other words, for any arbitrary $i \in M$, the probability that at most $\frac58$ of the $R_j$ sets contain a vertex incident to $i$ is bounded by $\exp\left(-\frac{m}{128}\right)$. By the symmetry of the way edges are randomly chosen, it analogously holds that the probability that at most $\frac58$ of the $R_j$ sets contain a vertex \emph{not} incident to $i$ is also bounded by $\exp\left(-\frac{m}{128}\right)$. Hence, with probability at least $1 - 2\exp\left(-\frac{m}{128}\right)$, more than $\frac58$ of the $R_j$ sets contain a vertex incident to $i$ \emph{and} more than $\frac58$ of the $R_j$ sets contain a vertex \emph{not} incident to $i$. When this happens, it follows that more than $\frac14$ of the $R_j$ sets contain both a vertex to which $i$ is adjacent and a vertex to which $i$ is not adjacent, in which case $\mathcal{P}$ is not a segmentation for $i$. Thus, the probability that $\mathcal{P}$ is a segmentation for $i$ is at most $2\exp\left(-\frac{m}{128}\right)$.
		Moreover, whether the fixed partition $\mathcal{P}$ is a segmentation for $i$ depends only on the random choice of $S_i$. Since this is determined independently for each $i$ by construction, it follows that
		\begin{align*}
			\Pr[E(M, \mathcal{P})] &= \Pr\left[\mathcal{P}\txt{ is a segmentation for M}\right]\\
			&= \prod_{i \in M}\Pr\left[\mathcal{P}\txt{ is a segmentation for i}\right]\\
			&\leq \prod_{i \in M}2\exp\left(-\frac{m}{128}\right)\\
			&= 2^{m/4}\exp\left(-\frac{m^2}{512}\right).
		\end{align*}
		
		Therefore, we have
		\begin{align*}
		p_m &= \Pr\left[\bigcup_{\substack{M \subseteq L\\\abs{M} = m/4}} \bigcup_{\mathcal{P} \in \Pi} E(M, \mathcal{P})\right]\\
		&\leq \sum_{\substack{M \subseteq L\\\abs{M} = m/4}} \sum_{\mathcal{P} \in \Pi} \Pr\left[E(M, \mathcal{P})\right] \stext{by the union bound}\\
		&\leq \sum_{\substack{M \subseteq L\\\abs{M} = m/4}} \sum_{\mathcal{P} \in \Pi} 2^{m/4} \exp\left(-\frac{m^2}{512}\right)\\
		&= {m/2 \choose m/4} \cdot \left(\frac{m}{2} - 1\right)!! \cdot 2^{m/4} \cdot \exp\left(-\frac{m^2}{512}\right),
		\end{align*}
		where in the final equality we have used the easy calculation that $\abs{\Pi} = (\frac{m}{2} - 1)!!$, which is the product of all odd positive integers less than $\frac{m}{2}$.\footnote{To see this, pick any arbitrary vertex in $R$. There are $\frac{m}{2} - 1$ possible vertices that can be paired with it. Then pick another arbitrary vertex. Now there are $\frac{m}{2} - 3$ possible vertices to pair with, and so on. See: \url{http://oeis.org/A001147}} This entire expression vanishes as $m \to \infty$.
	\end{proof}

\begin{proof}[Proof of Theorem \ref{thmMain}]
Let $T$ be an arbitrary algorithm in the Robertson-Webb model. By Lemma \ref{lemMustQueryBoundary}, if $T$ outputs a provably locally proportional allocation on a given instance from our construction after at most $8\left\lfloor\frac{n - 1}{32}\right\rfloor^2$ queries, there exists an allocation (namely, the allocation output by $T$) and some set $M \subseteq L$ of size $\frac{m}{4}$ such that, for all $i \in M$, $b_i \leq \frac{m}{16}$. By Lemma \ref{lemBoundaryImpliesPermutation}, such a set $M$ always has a segmentation. Therefore, the probability that $T$ outputs a provably locally proportional allocation after at most $8\left\lfloor\frac{n - 1}{32}\right\rfloor^2$ queries is bounded by the probability that there exists a set $M \subseteq L$ of size $\frac{m}{4}$ that has a segmentation. As $n \to \infty$, we have $m \to \infty$ as well, so by Lemma \ref{lemPermutationProbabilityBound}, this probability vanishes. Hence, $T$ fails with high probability over the randomness in the construction of $G$.
\end{proof}

\section{Conclusion}\label{secConclusion}

We have shown that any cake-cutting algorithm for local proportionality requires $\Omega(n^2)$ queries. Just as with the $\Omega(n^2)$ lower bound for envy-freeness \cite{EFLowerBound}, the construction uses a simple, non-adaptive query adversary in which all queries are answered according to the same valuation function. Roughly, both proofs are based on the observation that any algorithm must verify the value that $\Omega(n)$ agents have for $\Omega(n)$ disjoint pieces of cake---though proving this necessity is substantially more involved for local proportionality than for envy-freeness.

Any further improvements to this lower bound would imply an improved lower bound for envy-freeness as well, which has been a major open question in fair division for years. A more promising direction for future work on local proportionality is to attempt to narrow the gap from above. For instance, can we find an algorithm for local proportionality making polynomially-many queries? This seems to be more plausible than the analogous question for envy-freeness.
	
	\section*{Acknowledgments}
	
	I would like to thank Ariel Procaccia and Daniel Halpern for helpful advice and feedback on an earlier version of this paper. This material is based upon work supported by the National Science Foundation Graduate Research Fellowship Program under Grant No.\txt{} DGE1745303. Any opinions, findings, and conclusions or recommendations expressed in this material are those of the author(s) and do not necessarily reflect the views of the National Science Foundation.
	
	\bibliographystyle{elsarticle-num}
	\bibliography{bibliography}

\begin{thebibliography}{10}
\expandafter\ifx\csname url\endcsname\relax
  \def\url#1{\texttt{#1}}\fi
\expandafter\ifx\csname urlprefix\endcsname\relax\def\urlprefix{URL }\fi
\expandafter\ifx\csname href\endcsname\relax
  \def\href#1#2{#2} \def\path#1{#1}\fi

\bibitem{Original1}
R.~Abebe, J.~M. Kleinberg, D.~C. Parkes,
  \href{http://dl.acm.org/citation.cfm?id=3091171}{Fair division via social
  comparison}, in: Proceedings of the 16th Conference on Autonomous Agents and
  MultiAgent Systems, {AAMAS} 2017, S{\~{a}}o Paulo, Brazil, May 8-12, 2017,
  2017, pp. 281--289.
\newline\urlprefix\url{http://dl.acm.org/citation.cfm?id=3091171}

\bibitem{Original2}
X.~Bei, Y.~Qiao, S.~Zhang,
  \href{https://doi.org/10.24963/ijcai.2017/508}{Networked fairness in cake
  cutting}, in: Proceedings of the Twenty-Sixth International Joint Conference
  on Artificial Intelligence, {IJCAI} 2017, Melbourne, Australia, August 19-25,
  2017, 2017, pp. 3632--3638.
\newblock \href {https://doi.org/10.24963/ijcai.2017/508}
  {\path{doi:10.24963/ijcai.2017/508}}.
\newline\urlprefix\url{https://doi.org/10.24963/ijcai.2017/508}

\bibitem{Indivisible1}
H.~Aziz, S.~Bouveret, I.~Caragiannis, I.~Giagkousi, J.~Lang,
  \href{https://www.aaai.org/ocs/index.php/AAAI/AAAI18/paper/view/17230}{Knowledge,
  fairness, and social constraints}, in: Proceedings of the Thirty-Second
  {AAAI} Conference on Artificial Intelligence, (AAAI-18), the 30th innovative
  Applications of Artificial Intelligence (IAAI-18), and the 8th {AAAI}
  Symposium on Educational Advances in Artificial Intelligence (EAAI-18), New
  Orleans, Louisiana, USA, February 2-7, 2018, 2018, pp. 4638--4645.
\newline\urlprefix\url{https://www.aaai.org/ocs/index.php/AAAI/AAAI18/paper/view/17230}

\bibitem{RWModel}
J.~M. Robertson, W.~A. Webb, Cake Cutting Algorithms: Be Fair If You Can, A. K.
  Peters, 1998.

\bibitem{ProportionalityUpperBound}
S.~Even, A.~Paz, \href{https://doi.org/10.1016/0166-218X(84)90005-2}{A note on
  cake cutting}, Discret. Appl. Math. 7~(3) (1984) 285--296.
\newblock \href {https://doi.org/10.1016/0166-218X(84)90005-2}
  {\path{doi:10.1016/0166-218X(84)90005-2}}.
\newline\urlprefix\url{https://doi.org/10.1016/0166-218X(84)90005-2}

\bibitem{ProportionalityLowerBound}
J.~Edmonds, K.~Pruhs,
  \href{http://dl.acm.org/citation.cfm?id=1109557.1109588}{Cake cutting really
  is not a piece of cake}, in: Proceedings of the Seventeenth Annual {ACM-SIAM}
  Symposium on Discrete Algorithms, {SODA} 2006, Miami, Florida, USA, January
  22-26, 2006, 2006, pp. 271--278.
\newline\urlprefix\url{http://dl.acm.org/citation.cfm?id=1109557.1109588}

\bibitem{EFLowerBound}
A.~D. Procaccia, \href{http://ijcai.org/Proceedings/09/Papers/049.pdf}{Thou
  shalt covet thy neighbor's cake}, in: {IJCAI} 2009, Proceedings of the 21st
  International Joint Conference on Artificial Intelligence, Pasadena,
  California, USA, July 11-17, 2009, 2009, pp. 239--244.
\newline\urlprefix\url{http://ijcai.org/Proceedings/09/Papers/049.pdf}

\bibitem{EFUpperBound}
H.~Aziz, S.~Mackenzie, \href{https://doi.org/10.1145/3382129}{A bounded and
  envy-free cake cutting algorithm}, Communications of the {ACM} 63~(4) (2020)
  119--126.
\newblock \href {https://doi.org/10.1145/3382129} {\path{doi:10.1145/3382129}}.
\newline\urlprefix\url{https://doi.org/10.1145/3382129}

\bibitem{Main}
X.~Bei, X.~Sun, H.~Wu, J.~Zhang, Z.~Zhang, W.~Zi,
  \href{https://doi.org/10.1137/1.9781611975994.130}{Cake cutting on graphs:
  {A} discrete and bounded proportional protocol}, in: Proceedings of the 2020
  {ACM-SIAM} Symposium on Discrete Algorithms, {SODA} 2020, Salt Lake City, UT,
  USA, January 5-8, 2020, 2020, pp. 2114--2123.
\newblock \href {https://doi.org/10.1137/1.9781611975994.130}
  {\path{doi:10.1137/1.9781611975994.130}}.
\newline\urlprefix\url{https://doi.org/10.1137/1.9781611975994.130}

\bibitem{NegativeAssociation}
D.~P. Dubhashi, D.~Ranjan, Balls and bins: {A} study in negative dependence,
  Random Structures and Algorithms 13~(2) (1998) 99--124.

\bibitem{Hoeffding}
W.~Hoeffding, \href{http://www.jstor.org/stable/2282952}{Probability
  inequalities for sums of bounded random variables}, Journal of the American
  Statistical Association 58~(301) (1963) 13--30.
\newline\urlprefix\url{http://www.jstor.org/stable/2282952}

\end{thebibliography}
\end{document}